\newcommand{\appref}[1]{\hyperref[#1]{{Appendix~\ref*{#1}}}}
\newcommand{\cancel}[1]{} 
\newcommand*{\cA}{\mathcal{A}} 
\newcommand*{\cB}{\mathcal{B}}
\newcommand*{\cC}{\mathcal{C}}
\newcommand*{\cL}{\mathcal{L}}
\newcommand*{\cD}{\mathcal{D}}
\newcommand*{\cQ}{\mathcal{Q}}
\newcommand*{\cP}{\mathcal{P}}
\newcommand*{\cS}{\mathcal{S}}
\newcommand*{\cT}{\mathcal{T}}
\renewcommand*{\tr}{\mathop{\mathrm{tr}}\nolimits}
\newcommand*{\sgn}{\mathop{\mathrm{sgn}}\nolimits}
\newcommand{\bc}{\begin{center}}
\newcommand{\ec}{\end{center}}
\newcommand{\id}{\mathbb{I}}
\newtheorem{theorem}{Theorem}
\newtheorem{lemma}{Lemma}
\def\id{\mathbb{I}}
\def\01{\{0,1\}}
\newcommand{\ket}[1]{|#1\rangle}
\newcommand{\bra}[1]{\langle#1|}
\newcommand{\proj}[1]{|#1\rangle\langle#1|}
\newcommand*{\eqref}[1]{(\ref{#1})}
\begin{document}
\newcommand*{\spec}{\mathsf{spec}}
\newcommand*{\sspec}{\mathsf{Sspec}}
\newcommand*{\photonnumber}{{\bf N}}
\newcommand*{\Mat}{\mathsf{Mat}}
\newcommand*{\poi}{\mathsf{Poi}}
\newcommand*{\bin}{\mathsf{Bin}}

\newcommand*{\bT}{\mathcal{T}_{\textsf{av}}}
\renewcommand*{\id}{\mathsf{id}}
\renewcommand*{\P}{{\bf P}}
\newcommand*{\hit}{{\bf t}}
\newcommand*{\gs}{\mathsf{Q}} 
\newcommand*{\gsc}{\mathsf{Q}^\bot} 

\newcommand{\vanyondn}{\raisebox{-0.5em}{\includegraphics[width=1.5em]{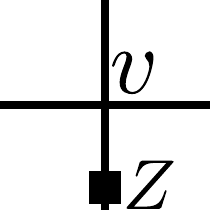}}}
\newcommand{\vanyonle}{\raisebox{-0.5em}{\includegraphics[width=1.5em]{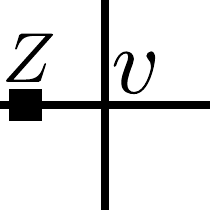}}}
\newcommand{\panyonup}{\raisebox{-0.5em}{\includegraphics[width=1.5em]{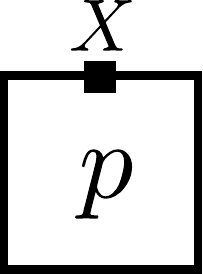}}}
\newcommand{\panyonri}{\raisebox{-0.5em}{\includegraphics[width=1.5em]{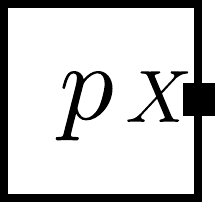}}}

\newcommand{\vertexstabilizer}{\raisebox{-0.75em}{\includegraphics[width=2.5em]{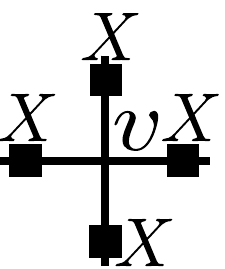}}}
\newcommand{\plaquettestabilizer}{\raisebox{-0.75em}{\includegraphics[width=2.5em]{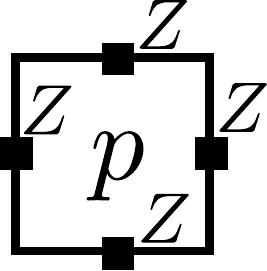}}}

\title{An optimal  dissipative encoder for the toric code}
\author{John Dengis$^1$, Robert K\"onig$^1$ and Fernando Pastawski$^2$}

\address{$^1$ Institute for Quantum Computing and Department of Applied Mathematics, University of Waterloo, 200 University Ave W, Waterloo, ON N2L 3G1, Canada\\
$^2$ Institute for Quantum Information and Matter, California Institute of Technology, 1200 E California Blvd, Pasadena, CA 91125, USA
}


\begin{abstract}
We consider the problem of preparing specific encoded resource states for the toric code 
 by local, time-independent interactions with a memoryless environment.  We propose a construction of such a dissipative encoder which converts product states to topologically ordered ones while preserving logical information. The corresponding Liouvillian is made up  of four-local Lindblad operators. For a qubit lattice of size~$L\times L$, we show that this process prepares encoded states in time~$O(L)$, which is optimal. This scaling compares favorably with known local unitary encoders for the toric code which take time of order~$\Omega(L^2)$ and require active time-dependent control. 
\end{abstract}\pacs{03.67.-a, 03.65.Vf}
\maketitle

\section{Introduction and main result}
Dissipation, while generally seen as detrimental for quantum computers, can nevertheless be a useful resource if suitably engineered. Appropriately chosen system-bath couplings can result in a non-equilibrium dynamics where initial states converge towards some dynamical steady state. This kind of  `quantum reservoir engineering' has been proposed as a viable approach towards the experimental preparation of interesting many-body states~\cite{KrausDiehlZoller08,verstraetewolfcirac09}. Remarkable examples include the preparation of pure states with long-range order in Bose-Einstein-condensates~\cite{DiehlMicheli08}, photonic arrays~\cite{Tomadinetal12}, as well as 
topologically ordered states~\cite{verstraetewolfcirac09,Bardynetal13}.  More generally, Verstraete et al.~\cite{verstraetewolfcirac09} argued that, at least in principle, an arbitrary quantum computation can be realized by dissipation. The corresponding process 
is similar to Feynman's clock construction and has the final state of the computation as its steady state. Subsequent work~\cite{Katoryanotiming} following this program proposed dissipative gadgets allowing to realize different dissipative dynamics during subsequent time-intervals.

Here we examine the dissipative preparation of specific topologically ordered states. 
While not realizing a fully dissipative computation, this basic primitive could act as a  building block in a hybrid scheme where initial states for quantum computation are prepared by thermalization and subsequent computations are performed in the usual  framework of topological quantum computation. We ask whether dissipative processes can be used to realize an {\em encoder}, i.e., a map which turns states on individual physical qubits  into encoded  (many-qubit) states. In contrast, previous work only considered the dissipative preparation of {\em some} ground state without guarantees  on the logical information.

It is worth mentioning that various unitary encoders are known for topologically ordered systems. 
For the toric code~\cite{kitaev97} on an $L\times L$~lattice, Dennis et al.~\cite{dennisetal02} gave a unitary circuit with two-local controlled-not (CNOT) gates of depth~$\Theta(L^2)$ acting as an encoder.   
Bravyi et al.~\cite{Bravyietalpropagationtop}  showed for  any evolution under a local time-dependent Hamiltonian acting as an encoder requires time at least~$\Omega(L)$.
In turn, Brown et al.~\cite{Brown2011} present a duality transformation from a 2D cluster state to a topologically ordered state which can be interpreted as a geometrically local quantum circuit of matching depth.
Dropping the requirement of locality, Aguado and Vidal~\cite{AguadoVidal08} gave an encoder with depth~$O(\log L)$ with geometrically non-local two-qubit gates.

The dissipative encoder considered here may be realized by designing suitable system-environment interactions. This is to be contrasted with schemes involving error correction, which generally consist of syndrome extraction by measurement and associated correction operations. For the toric code, an encoding procedure of this form was given~\cite{Grudkaetal12}. It involves active error correction operations similar to the minimal matching technique used in~\cite{dennisetal02}.

To define the notion of an encoder in more detail, consider a quantum error-correcting code $\cQ\cong(\mathbb{C}^2)^{\otimes k}\subset(\mathbb{C}^2)^{\otimes n}$ encoding $k$~logical qubits into $n$~physical qubits. Informally, an encoder is a map taking any state~$\ket{\Psi}\in(\mathbb{C}^2)^{\otimes k}$ into its encoded version~$\ket{\overline{\Psi}}\in\cQ\subset (\mathbb{C}^2)^{\otimes n}$. (This notion implicitly assumes a choice of basis of~$\cQ$). Since we are interested in a physical system of $n$~qubits, we will require the encoder to convert  a `simple' unencoded initial state into an encoded logical state. That is, we 
ask that for a fixed subset $A_1,\ldots,A_k$ of qubits and a fixed product state $\ket{\varphi}_{A_{k+1}}\otimes\cdots\otimes \ket{\varphi}_{A_n}$ on the remaining qubits, the encoder maps
\begin{eqnarray}
\ket{\Psi}_{A_1\cdots A_k}\otimes \ket{\varphi}_{A_{k+1}}\otimes\cdots\otimes \ket{\varphi}_{A_n}\quad\mapsto \quad \ket{\overline{\Psi}}\in\cQ \qquad\textrm{ for all }\ket{\Psi}\in(\mathbb{C}^2)^{\otimes k}\ .\label{eq:encodermap}
\end{eqnarray}
We are interested in encoders realized by evolution under a Markovian master equation
\begin{eqnarray*}
\frac{d}{dt}\rho=\cL(\rho)\ .
\end{eqnarray*}  Here the Liouvillian has Lindblad form
\begin{eqnarray*}
\cL(\rho) &=\sum_j L_j\rho L_j^\dagger -\frac{1}{2}\{L_j^\dagger L_j,\rho\}\ 
\end{eqnarray*}
with Lindblad operators~$L_j$ acting locally  on a constant number of qubits. 
We ask whether the completely positive trace-preserving map (CPTPM)~$e^{t\cL}$ generated by~$\cL$ is an (approximate) encoder for sufficiently large times~$t$, i.e., whether it can realize the map~\eqref{eq:encodermap}.  Our result is the following.
\begin{theorem}\label{thm:encoder}
Let $\cQ\subset (\mathbb{C}^2)^{\otimes 2L^2}$ be the toric code consisting of $2L^2$ qubits on the edges of a periodic $L\times L$-lattice.
 Consider the partition of the qubits into disjoint sets $\cA\cup\cB\cup \cB'\cup \cC\cup\cC'\cup\cD$
shown in Figure~\ref{fig:encoderqubits}. That is,  
\begin{itemize}
\item
$\cA=\{A_1,A_2\}$  are two neighboring qubits having a common adjacent vertex~$v_*$ and plaquette $p_*$.
\item
$\cB=\{B_1,\ldots,B_{L-1}\}$ and $\cC=\{C_1,\ldots,C_{L-1}\}$ are located along  a vertical respectively horizontal line passing through $A_1$.
\item
$\cB'=\{B'_1,\ldots,B'_{L-1}\}$ and $\cC'=\{C'_1,\ldots,C'_{L-1}\}$ are located along  a horizontal respectively vertical line passing through $A_2$.
\item
$\cD$ are the remaining $2(L-1)^2$ qubits.
\end{itemize}
There is a geometrically local Liouvillian~$\cL$ (with $4$-qubit Lindblad operators)
such that the following holds:
For any state $\rho_\cD$  on $\cD\cong(\mathbb{C}^{2})^{\otimes 2(L-1)^2}$ and $\ket{\Psi}\in(\mathbb{C}^2)^{\otimes 2}$, and for any  $\epsilon>0$, we have
\begin{eqnarray*}
\Big\|e^{t\cL}\left(\proj{\Psi}_{\cA}\otimes\proj{+}^{\otimes 2(L-1)}_{\cB\cB'}\otimes\proj{0}^{\otimes 2(L-1)}_{\cC\cC'}\otimes\rho_{\cD}\right)-\proj{\overline{\Psi}}\Big\|_1\leq\epsilon\ .
\end{eqnarray*} 
whenever
\begin{eqnarray}
t\geq  (4\ln(2))\cdot L+2\ln(16\epsilon^{-2})\ .
\end{eqnarray}
In this expression, we use the trace norm $\|A\|_1=\tr\sqrt{A^\dagger A}$  for Hermitian operators. 
\end{theorem}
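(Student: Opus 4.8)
The plan is to build $\cL$ from two families of four-qubit Lindblad operators, one acting on plaquette (charge) excitations and one on vertex (flux) excitations, engineered so that (i) every jump operator commutes with a fixed complete set of logical operators of $\cQ$, and (ii) the operators implement a \emph{directed} transport that sweeps all syndrome defects into the drain region $\cA$, where they annihilate. Concretely I would use jumps of the form $L_e=\sigma_e\,\Pi_e$, where $\sigma_e\in\{X_e,Z_e\}$ flips a single edge and $\Pi_e$ is a projector, diagonal in the relevant stabilizer basis, onto the local syndrome configuration of the two adjacent stabilizers that certifies that the flip carries a defect one step \emph{towards} $\cA$ and never away from it. This directedness is essential: an undirected cooling scheme lets defects diffuse and converges only in time $\Theta(L^2)$, whereas ballistic transport can reach the $\Omega(L)$ lower bound.

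Then I would explain the encoder property from the choice of input. Fix logical representatives so that $\overline{X}_1,\overline{X}_2$ are $X$-strings running through $\cA$ along $\cB\cB'$, and $\overline{Z}_1,\overline{Z}_2$ are $Z$-strings running through $\cA$ along $\cC\cC'$. Because $\cB\cB'$ is prepared in $\ket{+}$ and $\cC\cC'$ in $\ket{0}$, the string portions of these operators act as the identity on the input state, so that the logical content of the input---the values $\langle\overline{X}_i\rangle$, $\langle\overline{Z}_i\rangle$ and their products---equals the content of $\ket{\Psi}$ read directly off the two qubits of $\cA$. Verifying this reduction to $\cA$, and that the chosen strings form an anticommuting logical algebra with the single-qubit Paulis on $A_1$ and $A_2$, is a finite stabilizer computation.

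Next I would prove convergence to the code space. Since each $L_e$ is a Pauli times a diagonal projector, the induced evolution of the diagonal (syndrome) sector decouples from the coherence sector and reduces to a classical Markov chain on defect configurations. Here directedness pays off: the chain is an absorbing cascade in which a defect at lattice distance $d$ from $\cA$ is driven monotonically inward and removed at the drain, so the whole configuration is cleaned in a number of effective steps proportional to the diameter $L$ rather than to $L^2$. I would bound the relaxation by a layer-by-layer argument tracking the advancing front of already-corrected stabilizers---each layer absorbed at constant rate---to obtain an $O(L)$ transport time; the additive $\log(\epsilon^{-2})$ term then comes from the exponential tail of the absorbing process, which one must wait out to reach trace distance $\epsilon$.

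The main obstacle, and the final step, is to show this linear-time relaxation occurs \emph{without dephasing the logical qubit}. Convergence to the code space only controls the diagonal syndrome part; the off-diagonal logical coherences must be preserved separately. This is exactly what requirement (i) buys: if $[L_e,\overline{X}_i]=[L_e,\overline{Z}_i]=0$ for all $e$ and $i$, then each logical observable is a constant of motion, $\cL^\dagger(\overline{P})=0$ in the Heisenberg picture, so the logical block is frozen while the syndrome block relaxes. The delicate point is that the \emph{same} four-local directed operators must simultaneously drive the syndrome to vacuum in time $O(L)$ \emph{and} commute with a full set of logical representatives; this forces the transport channels and the logical string supports to be chosen compatibly, and it is near the drain $\cA$---where defects are destroyed and where the logical information is anchored---that the construction must be checked most carefully. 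Combining the classical total-variation bound from the cascade with the exact conservation of the logical observables, and converting the former into the trace norm, then yields the stated estimate for $t\geq(4\ln 2)L+2\ln(16\epsilon^{-2})$.
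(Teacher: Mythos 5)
Your overall architecture is the paper's: directed single-Pauli corrections conditioned on the local syndrome, with the plaquette $p_*$ and vertex $v_*$ adjacent to $\cA$ acting as the drain; logical string representatives routed through $\cB\cB'$ and $\cC\cC'$ so that every jump operator commutes with them, yielding $\cL^\dagger(\overline{P})=0$ in the Heisenberg picture; and the same two-part structure (conserved logical observables plus $O(L)$ convergence to the code space). One concrete slip against the theorem as stated: your jump operators $L_e=\sigma_e\Pi_e$ condition on the syndrome configuration of \emph{both} stabilizers adjacent to the flipped edge, which makes them supported on seven qubits (two adjacent plaquettes share only the flipped edge), contradicting the claimed four-locality. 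The paper conditions only on the \emph{source} stabilizer --- its Kraus operators are $\P_j^+$ and $C_j\P_j^-$, supported on the four qubits of a single stabilizer --- and this suffices: if the destination is already occupied, the move simply annihilates the defect pair, so no check of the destination syndrome is needed.

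Where you genuinely diverge is the proof of the convergence bound, and there your sketch has a gap. Your reduction to a classical absorbing Markov chain on defect configurations is legitimate (the jump operators map syndrome sectors to syndrome sectors, so populations evolve autonomously and coherences between sectors decay), but the ``layer-by-layer front absorbed at constant rate'' is a heuristic, not an argument: in continuous time with independent unit-rate clocks the front does not advance deterministically, and a defect at distance $d$ from the drain is absorbed only after a sum of $d$ independent exponential waiting times. To extract $t\geq cL+O(\ln(1/\epsilon))$ you would need concentration bounds on these sums, a union bound over the up to $O(L^2)$ initial defects, and a coupling argument showing that defect--defect annihilations (which occur because defects share no clocks until they meet) only accelerate absorption. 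The paper avoids all of this with a Lyapunov observable: $D=\sum_j\alpha^{f(j)}\P_j^-$ with $\alpha=2$ and $f$ the axial distance to the drain satisfies $\frac{d}{dt}\langle D\rangle_t\leq -(1-\alpha^{-1})\langle D\rangle_t$ by the monotonicity property \eqref{eq:monotonicitypropertyf}, giving $\tr(\gsc e^{t\cL}(\rho))\leq e^{-t/2}\,2^{2L}$ uniformly over initial states in a few lines (Lemma~\ref{lem:convergencetimebound}), with no classical reduction required. Finally, your last step glosses exactly the place the constant $16\epsilon^{-2}$ comes from: code-space convergence plus conserved logical expectation values do not directly bound $\|e^{t\cL}(\rho_0)-\proj{\overline{\Psi}}\|_1$. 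The paper first applies the gentle measurement lemma (cost $2\sqrt{\epsilon}$ in trace norm) to replace the evolved state by its normalized projection onto $\cQ$, and then uses that the logical Paulis form a \emph{complete} operator algebra on $\cQ$, so a normalized logical $\overline{P}$ can realize the Helstrom maximization between the projected state and $\proj{\overline{\Psi}}$; the total loss $4\sqrt{\epsilon}$ is what converts the convergence time $2\ln(1/\epsilon)$ into the stated $2\ln(16\epsilon^{-2})$. Your proposal would be complete once the locality of the jumps is repaired, the absorbing-cascade bound is made quantitative (or replaced by the Lyapunov argument), and this final trace-norm conversion is spelled out.
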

\noindent We will give a detailed description of the relevant Liouvillian in Section~\ref{sec:schemedescription}. 
\begin{figure}
\centering
\includegraphics[width=10cm]{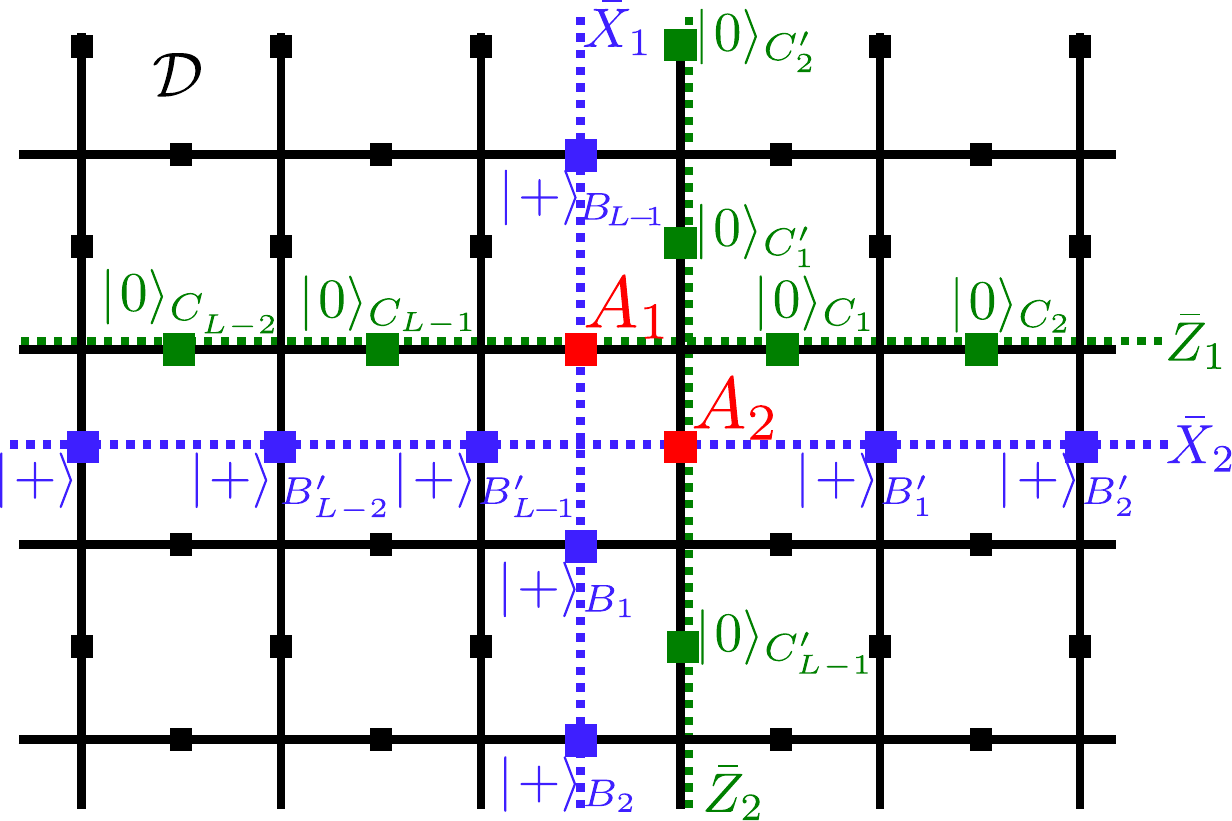}
\caption{This figure indicates the relevant qubits in Theorem~\ref{thm:encoder}, and their appropriate initialization for encoding: Qubits $A_1,A_2$ are initialized in the state to be encoded. 
Each qubit in~$\cB\cup\cB'$ is in the state~$\ket{0}$, while  qubits in~$\cC\cup\cC'$ are in the state~$\ket{+}$. 
The state of the remaining qubits~$\cD$ can be an arbitrary (mixed) state. 
In the figure, we also illustrate the support of possible realizations for logical operators $(\bar{X}_1,\bar{Z}_1)$ and $(\bar{X}_2,\bar{Z}_2)$ associated with the first and second logical qubit, respectively.
Our encoding procedure requires choosing one such realization for each logical Pauli generator such that they overlap only at the initial unencoded qubits.
\label{fig:encoderqubits} }
\end{figure}
The evolution~$e^{t\cL}$ implements a continuous-time version of a local error correction process somewhat analogous to Toom's rule: excitations move towards a single plaquette/vertex where they annihilate. 
An analogous ground-state preparation scheme for more general frustration-free Hamiltonians was discussed in~\cite{verstraetewolfcirac09}. However, guarantees about encoded information appear to be harder to obtain in their generic scheme. 
Furthermore because their construction requires the injectivity~\cite{PEPSinjective} property of the associated projected entangled pair state (PEPS) description (and hence blocking of sites), the resulting locality of the Lindblad operators will be slightly worse. In contrast, our scheme directly exploits the stabilizer structure of the underlying code, resulting in a comparatively simple Liouvillian.  
Indeed, our construction is optimal in  terms of locality, i.e., the number of particles involved in each Lindblad term~\cite{Marvian2013}. Related locality constraints for pure steady states were  derived\footnote{In terms of \cite[Definition 2]{Ticozzi2014}, our construction is capable of preparing any state in  ground space of the toric code as a \emph{conditionally asymptotically stable state} of a Lindblad dynamics.} in~\cite{TicozziViola12, Ticozzi2014}.  In our setup,  the initial product state determines in a transparent way which code state is prepared.  Our work goes beyond the mere characterization of steady states by adding two key ingredients: the consideration of logical observables (which are preserved) and an analysis of the convergence towards the ground space of the toric code.

The bound $O(L)$ on the convergence time
established by Theorem~\ref{thm:encoder} improves on  the $O(L^2)$ upper bound predicted for the analogous construction  in~\cite{verstraetewolfcirac09}, without a guarantee on the logical information. In fact, it is tight:  there are initial states~$\rho$ 
which thermalize slowly, i.e.,~$e^{t\cL}(\rho)$ is  far away from the code space~$\cQ$ for any time~$t\ll L$. In separate work~\cite{KoePas13}, we provide a general no-go theorem in this direction: dissipative state preparation of topologically ordered states requires at least a linear amount of time in~$L$ if the Liouvillian is local.  Combined with Theorem~\ref{thm:encoder}, this implies that the construction presented here is optimal in terms of preparation time among the entire class of local Liouvillians.  

In summary, our work shows that dissipative processes can be used to implement an encoder for the  toric code. 
Intriguingly, this encoder is more time-efficient than the best known unitary circuit. 
We stress, however, that  both types of encoders need to be supplemented with additional mechanisms in the presence of noise, especially if the encoded information is further processed. As discussed in~\cite{Pastawskietalmemory11}, local Liouvillians such as the one considered here are not suitable for the preservation of encoded information.

\section{Description of the Liouvillian\label{sec:schemedescription}}
\subsection{A Liouvillian for a general stabilizer code\label{sec:generalstabilizercodes}}
We first describe a generic Liouvillian associated with a stabilizer code~$\cQ$ with stabilizer generators~$\{S_j\}_{j\in\cS}$. 
We will subsequently specialize this to the toric code. Let~$\P_j^{\pm}=\frac{1}{2} (I\pm S_j)$ be the projections onto the $\pm 1$ eigenspaces of the stabilizer~$S_j$. 
The code space~$\cQ$ is the ground space of the Hamiltonian
\begin{eqnarray*}
H&=\frac{1}{2}\left(|\cS|\cdot I-\sum_j S_j\right)=\sum_{j}\P_j^-
\end{eqnarray*}
(the global energy shift is introduced for convenience). 
Our goal is to implement a local error-correction strategy by a dissipative evolution.
Concretely, we associate a unitary Pauli correction operator~$C_j$ to each stabilizer generator~$S_j$ such that $C_j$ and $S_j$ anticommute, i.e.,  $\{C_j, S_j\}=0$.  
 We define the CPTPM
\begin{eqnarray*}
\cT_j (\rho) &=\P_j^+\rho\P_j^+ + C_j\P_j^-\rho\P_j^- C_j^\dagger\ .
\end{eqnarray*}
Note that by definition,~$\cT_j$ lowers the energy of the term $\P^-_j$ in the Hamiltonian, that is, 
\begin{eqnarray}
\tr(\cT_j(\rho)\P^-_j)\leq \tr(\rho \P^-_j)\qquad\textrm{ for any state }\rho\ .\label{eq:loweringenergy}
\end{eqnarray}
While after application of~$\cT_j$, the stabilizer constraint defined by~$S_j$ is satisfied, its application may create a non-trivial syndrome (excitation) for a neighboring stabilizer~$S_{k}$, $k\neq j$. 
By design (i.e., the choice of correction operators for the toric code discussed below),  repeated application of all~$\{\cT_j\}_{j\in\cS}$ eventually removes all excitations, resulting in a state supported on the code space~$\cQ$. It will be convenient to introduce the averaged CPTPM
\begin{eqnarray}
\bT(\rho)&=\frac{1}{|\cS|} \sum_j \cT_j\ ,\label{eq:averagebtsuperop}
\end{eqnarray}
which randomly chooses a syndrome and applies the associated correction map.

By the correspondence discussed in~\cite{WolfCirac06}, 
each local CPTPM $\cT_j$ defines a local Liouvillian $\cL_j=\cT_j-\id$. 
We are interested in the evolution under $\cL=\sum_j \cL_j$, which, by definition, is a sum of constant-strength local terms. 
 Observe that $\cL=|\cS|\cdot (\bT-\id)$.

\subsection{Construction for the toric code \label{sec:toriccodedescr}}
We now consider the toric code with qubits on the edges of an $L\times L$ (periodic) square lattice.  We  separate the  Hamiltonian into
\begin{eqnarray*}
\fl H&=H^{(p)}+H^{(v)}\ \textrm{ where }H^{(p)}=\frac{1}{2} \left(L^2\cdot I-\sum_{p\in\cS^{(p)}} S_p\right)\ \textrm{ and }\ H^{(v)}=\frac{1}{2}\left(L^2\cdot I-\sum_{v\in\cS^{(v)}} S_v\right)\ .
\end{eqnarray*}
where the former includes all plaquette- and the latter includes all vertex terms.
Here, we have taken $S_p$ and $S_v$ to correspond to plaquette and vertex stabilizers respectively. 
\begin{equation}
S_p=Z^{\otimes 4}=\plaquettestabilizer \qquad\qquad S_v=X^{\otimes 4}=\vertexstabilizer,
\end{equation}
In other words, $S_p$ is the product of $Z$-type Pauli operators acting on the four qubits on the edges bounding plaquette~$p\in\cS^{(p)}$ and $S_v$ is the product of four $X$-type Pauli operators acting on the qubits incident to vertex~$v\in\cS^{(v)}$.
To define the associated Pauli correction operators $\{C_v\}_{v\in\cS^{(v)}}$ and $\{C_p\}_{p\in\cS^{(p)}}$, let us partition the set of vertices and plaquettes into
\begin{equation}
\cS^{(p)}=\{p_*\}\cup \cS^{(p)}_\rightarrow\cup \cS^{(p)}_\uparrow\qquad\textrm{ and }\qquad\cS^{(v)}=\{v_*\}\cup \cS^{(v)}_\leftarrow\cup \cS^{(v)}_\downarrow\  . 
\end{equation}
Here $p_*$ is a single plaquette,~$\cS^{(p)}_{\rightarrow}$ consists of the $L-1$~plaquettes lying on a fundamental cycle of the torus (which we refer to the `equator', running `horizontally' or `east-west' along the torus) on which~$p_*$ is located, whereas $\cS^{(p)}_{\uparrow}$ are the remaining~$L^2-L$~plaquettes. The vertex~$v_*$ as well as the sets $\cS^{(v)}_\leftarrow$ and $\cS^{(v)}_\downarrow$ are defined similarly on the dual lattice, 
see Figure~\ref{fig:liouvilleandef}.

\begin{figure}
\centering
\subfigure[Correction operations for plaquettes\label{it:plaquettesfig}]{\includegraphics[width=8cm]{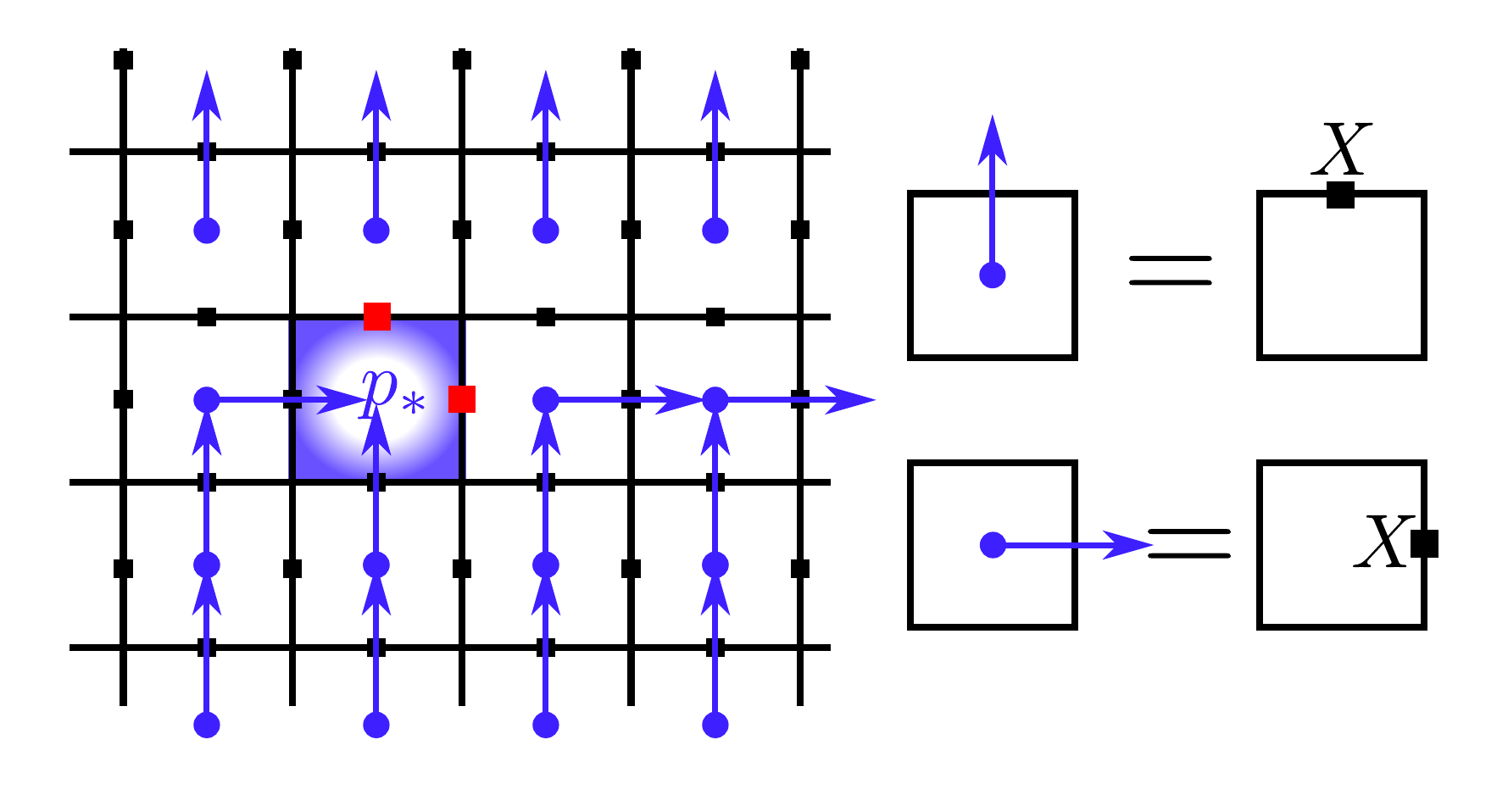}}
\subfigure[Correction operations for vertices\label{it:vertexfig}]{\includegraphics[width=8cm]{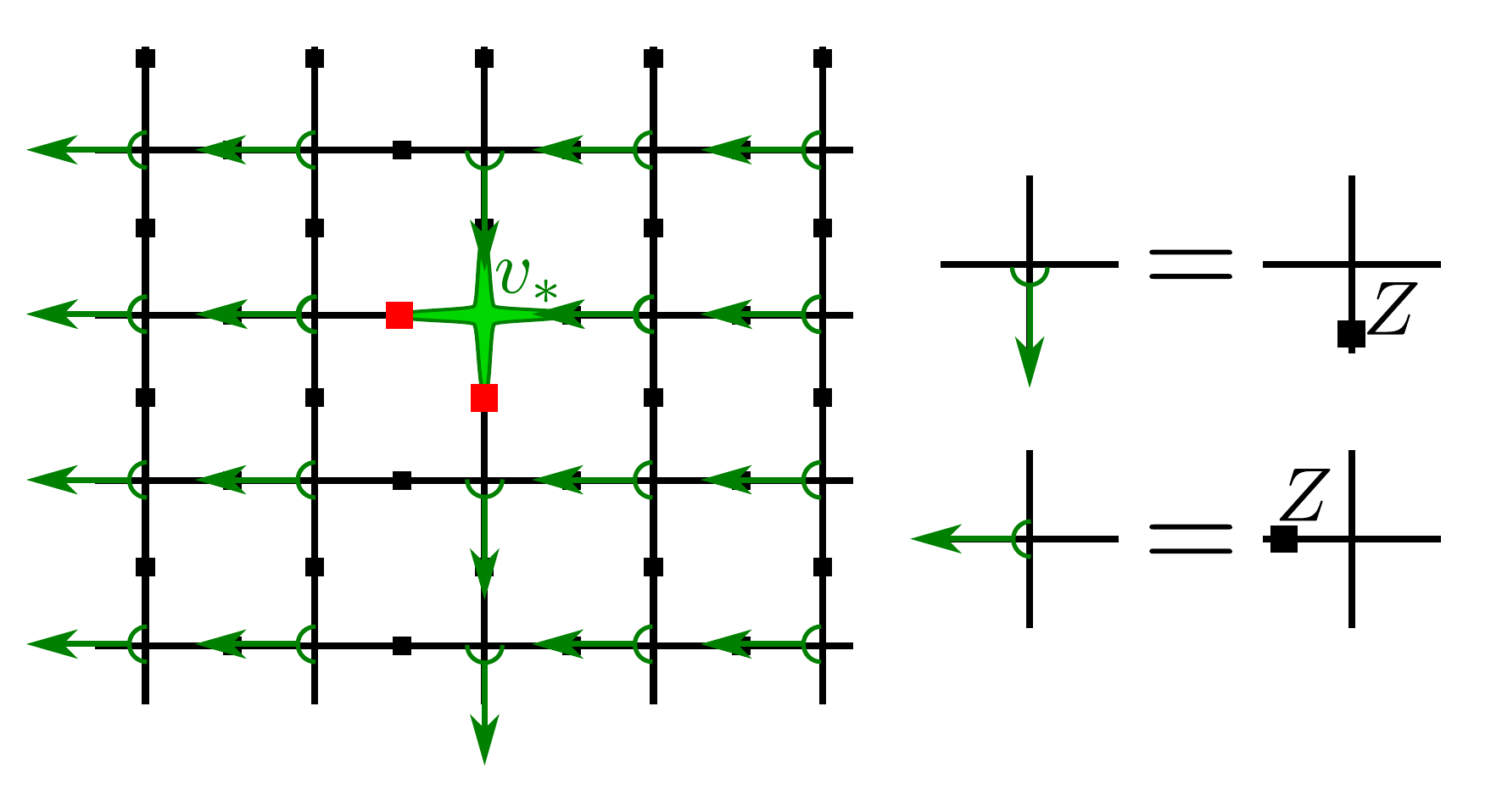}}
\caption{This figure defines the Liouvillian~$\cL$ (see text). In particular,~\ref{it:plaquettesfig} defines the sets $\cS^{(p)}_\rightarrow$ and $\cS^{(p)}_\uparrow$. The associated correction operation~$C_p$ consists of a single-qubit Pauli-$X$. It moves magnetic (plaquette-type) excitations  to the neighboring plaquette according to the indicated arrow. Similarly,~\ref{it:vertexfig} defines the sets $\cS^{(v)}_\leftarrow$ and $\cS^{(v)}_\downarrow$. Electric (vertex-type) excitations are moved from one vertex to the next according to these arrows. The associated correction operator $C_v$ is a single-qubit Pauli-$Z$. The qubits $A_1A_2$ carrying the logical information  are indicated in red. They both are part of the special plaquette~$p_*$, and incident to the vertex~$v_*$. No correction operation acts on the qubits~$A_1A_2$. 
 \label{fig:liouvilleandef}}
\end{figure}

The subscript associated with these sets indicates the direction of movement of an excitation under application of the local correction map. 
For example, a magnetic excitations (caused by an $X$-error) on a plaquette~$p\in\cS^{(p)}_{\uparrow}$ will move to the neighboring plaquette to the north of~$p$ under application of the correction map~$C_p$. That is, we define
\begin{eqnarray}
C_p&=\panyonup \qquad\textrm{ for }\qquad p\in\cS^{(p)}_\uparrow\qquad C_p=\panyonri \qquad\textrm{ for }\qquad p\in\cS^{(p)}_\rightarrow\nonumber\\
C_v&=\vanyondn \qquad\textrm{ for }\qquad v\in\cS^{(v)}_\downarrow\qquad C_v=\vanyonle \qquad\textrm{ for }\qquad v\in\cS^{(v)}_\leftarrow\ .\nonumber
\end{eqnarray}
We will set
\begin{eqnarray*}
C_{v_*}=C_{p_*}=I\ ,
\end{eqnarray*}
corresponding to a trivial correction operation (this is simply done for convenience). The Liouvillian~$\cL$ is then defined as in Section~\ref{sec:generalstabilizercodes}. 

\section{Proof of Theorem~\ref{thm:encoder}}
The proof of Theorem~\ref{thm:encoder} relies on  two basic statements. The first one concerns the logical information encoded in the state: for suitable initial states, this information is preserved along the evolution.
Recall that the toric code has two encoded qubits. Let
\begin{eqnarray}
\bar{X}_1&=\left(\otimes_{b\in\cB} X_b\right)\otimes X_{A_1}\qquad \bar{Z}_1=\left(\otimes_{c\in\cC} Z_c\right)\otimes Z_{A_1}\qquad \bar{Y}_1=i\bar{X}_1\bar{Z}_1\nonumber\\
\bar{X}_2&=\left(\otimes_{b\in\cB'} X_b\right)\otimes X_{A_2}\qquad \bar{Z}_2=\left(\otimes_{c\in\cC'} Z_c\right)\otimes X_{A_2}\qquad \bar{Y}_2=i\bar{X}_2\bar{Z}_2\nonumber
\end{eqnarray}
 be the two-qubit logical Pauli operators defined by Figure~\ref{fig:encoderqubits}, and let us set
\begin{eqnarray*}
\bar{\cP}_2=\{ \bar{P}_1\bar{P}_2\ |\ \bar{P}_1\in \{I,\bar{X}_1,\bar{Y}_1,\bar{Z}_1\}, \bar{P_2}\in \{I, \bar{X}_2,\bar{Y}_2,\bar{Z}_2\}\}
\end{eqnarray*} where $I$ is the identity operator. These operators play a crucial role in the following statement.

\begin{lemma}[Preservation of logical information]\label{lem:logicalinfopres}
For any two-qubit Pauli operator $P=P_1\otimes P_2$, $P_j\in \{I,X_j,Y_j,Z_j\}$, let $\overline{P}=\overline{P}_1\overline{P}_2$, $\overline{P}_j\in\bar{\cP}_2$ be its corresponding logical counterpart. 
Consider an initial state of the form
\begin{eqnarray}\rho_0&=\proj{\Psi}_{\cA}\otimes\proj{+}^{\otimes 2(L-1)}_{\cB\cB'}\otimes\proj{0}^{\otimes 2(L-1)}_{\cC\cC'}\otimes\rho_{\cD}\ ,\label{eq:specialinitialstate}
\end{eqnarray}
where the state~$\rho_\cD$ is arbitrary on $(\mathbb{C}^2)^{\otimes 2(L-1)^2}$. 
Then
\begin{eqnarray}
\tr(\overline{P}e^{t\cL}(\rho_0))&=\bra{\Psi}P\ket{\Psi}\qquad\textrm{ for all }t\textrm{ and all }P\in\cP_2 .\label{eq:toproveeqtimeevolvelog}
\end{eqnarray} 
\end{lemma}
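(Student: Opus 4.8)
The plan is to pass to the Heisenberg picture and show that each logical observable is a fixed point of the adjoint dynamics, so its expectation cannot change with time. Writing $\rho_t=e^{t\cL}(\rho_0)$, we have $\frac{d}{dt}\tr(\overline{P}\rho_t)=\tr((\cL^\dagger\overline{P})\rho_t)$ with $\cL^\dagger=\sum_j(\cT_j^\dagger-\id)$. Hence it suffices to prove two things: (i) $\cT_j^\dagger(\overline{P})=\overline{P}$ for every $j\in\cS$ and every $\overline{P}\in\bar{\cP}_2$, which gives $\cL^\dagger\overline{P}=0$ and therefore $\tr(\overline{P}\rho_t)=\tr(\overline{P}\rho_0)$ for all $t$; and (ii) the evaluation $\tr(\overline{P}\rho_0)=\bra{\Psi}P\ket{\Psi}$, which fixes the constant. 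As we will see, (i) reduces to the commutation relations $[\overline{P},S_j]=0$ and $[\overline{P},C_j]=0$; since commutation with a fixed operator is preserved under products, it is enough to verify these for the four generators $\bar{X}_1,\bar{Z}_1,\bar{X}_2,\bar{Z}_2$ of $\bar{\cP}_2$.

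For step (ii) I would use the eigenstate structure of the initial product state. Each generator restricts to a definite Pauli type on its string: $\bar{X}_i$ acts as $X$ on the $X$-string ($\cB$ or $\cB'$), while $\bar{Z}_i$ acts as $Z$ on the $Z$-string ($\cC$ or $\cC'$). Because $\cB\cup\cB'$ is initialized in $\ket{+}$ (an $X$-eigenstate of eigenvalue $+1$) and $\cC\cup\cC'$ in $\ket{0}$ (a $Z$-eigenstate of eigenvalue $+1$), every string factor contributes $+1$; and since the string supports meet only on $\cA$, the operator $\overline{P}$ effectively acts as the corresponding physical Pauli $P$ on $A_1A_2$ and as the identity elsewhere. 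Tracing against $\rho_0$ then gives $\bra{\Psi}P\ket{\Psi}$, independently of $\rho_\cD$.

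For step (i) I would compute directly from the adjoint $\cT_j^\dagger(A)=\P_j^+A\P_j^++\P_j^-C_jAC_j\P_j^-$ (using $C_j^\dagger=C_j$), where $\P_j^\pm=\frac{1}{2}(I\pm S_j)$. Two facts drive the cancellation. First, $\overline{P}$ lies in the normalizer of the stabilizer group, so $[\overline{P},S_j]=0$ and hence $\overline{P}$ commutes with both $\P_j^\pm$; using $(\P_j^\pm)^2=\P_j^\pm$, the first term becomes $\overline{P}\,\P_j^+$. Second, if $[\overline{P},C_j]=0$ then $C_j\overline{P}C_j=\overline{P}$, so the second term collapses to $\P_j^-\overline{P}\,\P_j^-=\overline{P}\,\P_j^-$. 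Adding the two yields $\overline{P}(\P_j^++\P_j^-)=\overline{P}$, as required. Note that commutation of $C_j$ with $\overline{P}$ is essential: were $C_j$ to anticommute with $\overline{P}$, the second term would flip sign and produce $\overline{P}S_j\neq\overline{P}$.

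The crux, and the step I expect to be the main obstacle, is therefore the geometric claim $[\overline{P},C_j]=0$ for every correction operator. The magnetic corrections $C_p$ are single-qubit $X$'s and so automatically commute with the $X$-type logicals, and likewise the electric corrections $C_v$ (single-qubit $Z$'s) commute with the $Z$-type logicals; what remains is to check that each $C_p=X_q$ has no odd overlap with the $Z$-strings $\cC,\cC'$ and each $C_v=Z_q$ has no odd overlap with the $X$-strings $\cB,\cB'$, together with the stipulation that no correction acts on $A_1,A_2$. I would verify this from the explicit construction of Section~\ref{sec:toriccodedescr} and Figure~\ref{fig:encoderqubits}: the edges toward which excitations are pushed are disjoint (in the anticommuting sense) from the chosen logical string supports, so all four generators, and hence every element of $\bar{\cP}_2$, commute with every $C_j$. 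This reduces the entire lemma to a finite, local incidence check on the lattice.
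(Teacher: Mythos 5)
Your proposal is correct and follows essentially the same route as the paper's proof: passing to the Heisenberg picture, showing $\cT_j^\dagger(\overline{P})=\overline{P}$ via $[\overline{P},S_j]=0$ and $[\overline{P},C_j]=0$ (checked on the four generators, with the single-qubit corrections commuting trivially with same-type logicals and being geometrically disjoint from the opposite-type strings), and evaluating $\tr(\overline{P}\rho_0)=\bra{\Psi}P\ket{\Psi}$ from the $\ket{+}$/$\ket{0}$ eigenstate structure of the initial state. Your explicit algebraic collapse of $\cT_j^\dagger(\overline{P})$ to $\overline{P}(\P_j^+ + \P_j^-)$ merely spells out a step the paper leaves implicit; there is no substantive difference.
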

\begin{proof}
Let $P=P_1\otimes P_2$ be arbitrary. Observe first that the rhs of~\eqref{eq:toproveeqtimeevolvelog} is equal to
\begin{eqnarray}
\bra{\Psi}P\ket{\Psi}=\tr(\overline{P}\rho_0)\  \label{eq:logicalopeval}
\end{eqnarray}
because of the definition of~$\overline{P}$ (cf.~Fig.~\ref{fig:liouvilleandef}) and the form of~$\rho_0$, i.e., the fact that the qubits 
in~$\cB\cB'$ and $\cC\cC'$ are $+1$-eigenstates of the single-qubit $X$ and $Z$- operators, respectively. In other words, it suffices to show the expectation value~$\tr(\overline{P}e^{t\cL}(\rho_0))$ is time-independent for initial states~$\rho_0$ of the form~\eqref{eq:specialinitialstate}. We claim that an even stronger statement holds: we have
\begin{eqnarray}
(e^{t\cL})^\dagger (\overline{P})=\overline{P}\qquad\textrm{ for any }\overline{P}\in\overline{\cP}_2\ ,\label{eq:constantnonevolution}
\end{eqnarray}
i.e., any observables of the form~$\overline{P}$ does not evolve in the Heisenberg picture.

To prove~\eqref{eq:constantnonevolution}, observe that the one-parameter family of unital maps $\{(e^{t\cL})^\dagger\}_{t\geq 0}$ is generated by the adjoint~$\cL^\dagger$ of the Liouvillian, hence~\eqref{eq:constantnonevolution} is equivalent to
\begin{eqnarray*}
\cL^\dagger(\overline{P})=0\ .
\end{eqnarray*}
Since $\cL=\sum_{j}\cL_j$ is a sum of Liouvillians~$\cL_j=\cT_j-\id$ associated with stabilizers~$j$,  it suffices to verify that for every $j$, we have $\cT_j^\dagger(\overline{P})=\overline{P}$
or
\begin{eqnarray}
\P_j^+\overline{P}\P_j^+ +\P_j^-C_j^\dagger \overline{P} C_j\P_j^-=\overline{P}\ .\label{eq:pjp}
\end{eqnarray}
Because $\overline{P}$ is a logical operator, it commutes with each stabilizer, and hence also the projections~$\P_j^\pm$. Equality,~\eqref{eq:pjp} is in fact implied by $[C_j^\dagger,\overline{P}]=0$, which we can verify by a case-by-case analysis for single-qubit (logical) operators $\overline{P}\in \{\bar{X}_1,\bar{Z}_1,\bar{X}_2,\bar{Z}_2\}$ (the general case then follows since a product $\overline{P}_1\overline{P}_2$ commutes with $C_j^\dagger$ if each factor $\overline{P}_j$ does). 
 Consider for example the case where $\overline{P}=\bar{Z}_1$. We then have to consider two cases:
\begin{enumerate}[(i)] 
\item the correction~$j=p$ is associated with a plaquette. In this case $C_p$ is a Pauli-$X$ (or the identity). However, inspection of the support of~$\bar{Z}_1$ (see Fig.~\ref{fig:encoderqubits}) 
and the location of the correction operators (see Fig.~\ref{it:plaquettesfig}) reveals that no correction operation $C_p$ acts on the support of $\bar{Z}_1$, hence $[C_p^\dagger,\bar{Z}_1]=0$ as desired.
\item the correction $j=v$ is associated with a vertex. In this case $C_v$ is a Pauli-$Z$ (or the identity), hence $[C_v^\dagger,\bar{Z}_1]=0$ holds trivially.
\end{enumerate}
\end{proof}

The second fundamental statement is about convergence to the ground space.
\begin{lemma}[Convergence time]\label{lem:convergencetimebound}
Let $\gs$ be the projection onto the code space of the toric code, and let $\gsc=I-\gs$ the projection onto the orthogonal complement. 
Let~$\rho$ be an arbitrary initial state. Then we have 
\begin{eqnarray}
\tr(\gsc e^{t\cL}(\rho))\leq \epsilon\qquad\textrm{ for all }t\geq (4\ln(2))\cdot L + 2\ln(1/\epsilon) \ .\label{eq:continuoustimeoverlapbound}
\end{eqnarray}
\end{lemma}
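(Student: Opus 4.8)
The plan is to control $\tr(\gsc e^{t\cL}(\rho))$ by a distance-weighted energy functional that the dynamics genuinely contracts. First I would reduce to a single sector: $\gsc$ and all the projectors $\P^-_p,\P^-_v$ are simultaneously diagonal (they are functions of the commuting stabilizers), and the two types of correction decouple, since $C_p$ (a single Pauli-$X$) commutes with every vertex stabilizer while $C_v$ (a single Pauli-$Z$) commutes with every plaquette stabilizer. Writing $\gsc\le \gsc_{(p)}+\gsc_{(v)}$ for the projectors onto the presence of a plaquette- resp.\ vertex-excitation, it suffices to bound $\tr(\gsc_{(p)}e^{t\cL}(\rho))$ and add the symmetric vertex term. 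The obstruction to a naive argument is that the bare energy $H^{(p)}=\sum_p\P^-_p$ is \emph{not} contracted: a short computation of the adjoint gives $\cT^\dagger_p(H)-H=-2\,\P^-_p\P^-_{\tau(p)}$, where $\tau(p)$ denotes the plaquette into which $C_p$ pushes an excitation, so the energy is constant while an excitation merely drifts and drops only when two excitations collide and annihilate. A working Lyapunov functional must therefore decrease already under pure drift.

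Accordingly I would introduce the drift-distance $d(p)\in\{0,\dots,2(L-1)\}$, the number of steps (north to the equator, then east) carrying $p$ to the sink $p_*$, so that $d(\tau(p))=d(p)-1$, and set
\[\Phi^{(p)}=\sum_p w_p\,\P^-_p,\qquad w_p=2^{d(p)}-1 .\]
The key feature is $w_{p_*}=0$. Generalizing the computation above, each moving correction satisfies $\cT^\dagger_p(\Phi^{(p)})-\Phi^{(p)}=(w_{\tau(p)}-w_p)\P^-_p-2w_{\tau(p)}\P^-_p\P^-_{\tau(p)}$, and with this choice $w_{\tau(p)}-w_p=-(w_p+1)/2$ while $2w_{\tau(p)}=w_p-1\ge0$. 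Summing over corrections and discarding the (negative semidefinite) annihilation terms then yields the operator inequality $\cL^\dagger(\Phi^{(p)})\le -\frac{1}{2}\Phi^{(p)}$, and hence $\tr(\Phi^{(p)}e^{t\cL}(\rho))\le e^{-t/2}\,\tr(\Phi^{(p)}\rho)$.

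The step I expect to be the crux is relating $\Phi^{(p)}$ back to $\gsc_{(p)}$, because an excitation resting at the sink contributes nothing to $\Phi^{(p)}$; without care $\Phi^{(p)}$ could vanish while the state is still excited, making the contraction vacuous. This is exactly where I would invoke parity: since $\prod_p S_p=I$, the number of plaquette excitations is always even, so on the support of $\gsc_{(p)}$ there are at least two of them; as $p_*$ can host at most one, at least one excitation sits at distance $\ge1$ and contributes weight $\ge1$. Therefore $\gsc_{(p)}\le\Phi^{(p)}$ as operators. Choosing $w_{p_*}=0$ is precisely what removes the residual floor term that would otherwise spoil the contraction, and parity is what still lets $\Phi^{(p)}$ dominate $\gsc_{(p)}$.

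Finally I would bound the initial value by $\tr(\Phi^{(p)}\rho)\le\sum_p w_p$. Because the weights are geometric in $d(p)$ and the far plaquettes are few, this sum telescopes to $(2^L-1)^2$ rather than the crude $L^2\cdot 2^{2(L-1)}$, which is what prevents any polynomial-in-$L$ prefactor from appearing. Combining the two sectors gives $\tr(\gsc e^{t\cL}(\rho))\le 2(2^L-1)^2 e^{-t/2}$, and imposing that the right-hand side be at most $\epsilon$ reproduces the claimed threshold $t\ge(4\ln2)\cdot L+2\ln(1/\epsilon)$ up to the explicit constants.
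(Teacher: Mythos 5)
Your proposal is correct and is essentially the paper's own argument: the paper likewise introduces a geometrically weighted syndrome observable $D=\sum_j \alpha^{f(j)}\P_j^-$ with $\alpha=2$ and a drift-adapted distance $f$ (weight effectively zero at the sinks $p_*,v_*$), computes the adjoint action $\cL^\dagger_k(\P_j^-)$, discards the negative annihilation terms to get $\frac{d}{dt}\langle D\rangle_t\leq -(1-\alpha^{-1})\langle D\rangle_t$, dominates $\gsc\leq D$ (implicitly via the same parity fact $\prod_p S_p=I$ that you correctly identify as the crux and spell out), and bounds the initial value by the geometric sum $\sum_j\alpha^{f(j)}<2^{2L}$. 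The only deviations are cosmetic --- your plaquette/vertex sector split, the weights $2^{d(p)}-1$ in place of the paper's $2^{d(p)-1}$, and your initial constant $2(2^L-1)^2$, which exceeds $2^{2L}$ for $L\geq 2$ and so costs an additive $2\ln 2$ in the time threshold, a slack you flagged and which is immaterial given the paper's explicitly unoptimized constants.
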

\noindent We have not optimized the constants in this bound as we are interested in the overall (linear) scaling in~$L$. The proof strategy is different from the arguments in~\cite{verstraetewolfcirac09} and may be of independent interest.
\newcommand*{\pred}{\mathsf{Pred}}
\begin{proof}
Consider the function $f:\cS\backslash\{v_*,p_*\}\rightarrow\mathbb{N}\cup \{0\}$ defined as follows: 
\begin{eqnarray*}
\textrm{ for a plaquette $p$: }&f(p) +1 =\textrm{ length of a path  moving north-east from $p$ to }p^* \\ 
\textrm{ for a vertex $v$: } &f(v) +1 =\textrm{ length of a path  moving south-west from $v$ to }v^*\ .
\end{eqnarray*}
In other words, the function 
expresses the axial distance to $v^*$ and $p^*$, respectively. 
For a vertex~$v$ (plaquette~$p$), the quantity~$f(v)$  ($f(p)$) is the number of vertices (plaquettes) traversed by an electric (magnetic) excitation on the primary (dual) lattice before reaching~$v_*$ ($p_*$) along a path $(v=v_1,v_2,\ldots,v_{f(v)},v_{*})$ (or $(p=p_1,p_2,\ldots,p_{f(p)},p_*)$). 
For the special vertex~$v^*$ and the plaquette~$p^*$, we set $f(v^*)=f(p^*)=-1$ for convenience (alternatively, we could omit the discussion of the corresponding trivial correction operations altogether as the stabilizer generators are linearly dependent). 

The key property of the function~$f$ is the fact that it is compatible with the way excitations are propagated under the correction operations. More precisely, for each stabilizer~$S_j$,  let $\pred(j)$ be the set of correction operations that anticommute with it, i.e., 
\begin{eqnarray*}
\pred(j):=\{k\in\cS \ |\ k\neq j \textrm{ and } \{C_k,S_j\}_+ =0\}\ .
\end{eqnarray*}
Then $f$ has the property that 
\begin{eqnarray}
k\in\pred(j)\textrm{ implies } f(k) \geq f(j)+1\ .\label{eq:monotonicitypropertyf}
\end{eqnarray}
Namely, whenever an excitation is created by $C_k$ at $S_j$, one can be certain that a higher valued excitation has been removed at $S_k$.
We will argue that for any $\alpha\geq 1$, we have
\begin{equation}
\tr(\gsc e^{t\cL}(\rho)) \leq  e^{-(1-\alpha^{-1}m)t}\sum_{j\in\cS}\alpha^{f(j)}\ ,\label{eq:upperboundonorthogprojection}
\end{equation}
where $m:=\max_{k\in\cS}|\{ j | k \in \pred(j) \}|$ is the maximal number  of stabilizers a single correction operator can excite.  
Note that Eq. (\ref{eq:upperboundonorthogprojection}) holds for arbitrary stabilizer codes with the corresponding definitions. Specializing to the toric code, where $m=1$, we obtain
$\tr(\gsc e^{t\cL}(\rho))\leq e^{-t/2}\cdot 2^{2L}$ (implying the claim) by choosing  $\alpha=2$ and observing that
\begin{eqnarray*}
\sum_{j\in\cS}\alpha^{f(j)} &=  2\alpha^{-1}\sum_{r,s=0}^{L-1}\alpha^{r+s}=2\alpha^{-1}\left(\frac{\alpha^L-1}{\alpha-1}\right)^2 < \frac{2\alpha^{2L-1}}{(\alpha-1)^2}\Big|_{\alpha=2}  = 2^{2L}\  .
\end{eqnarray*}
 To prove~\eqref{eq:upperboundonorthogprojection}, consider the observable
\begin{eqnarray*}
D&=\sum_{j\in\cS\backslash\{v_*,p_*\}}\alpha^{f(j)} \P_j^-\ .
\end{eqnarray*}
Clearly $D\geq \gsc$ for any $\alpha\geq 1$,  hence it suffices to show that the expectation value $\tr\left( D e^{t\cL}(\rho)\right)$ is upper bounded by the rhs.~of~\eqref{eq:upperboundonorthogprojection}, that is 
\begin{eqnarray}
\tr(De^{t\cL}(\rho))\leq e^{-(1-\alpha^{-1}m)t}\sum_{j\in\cS}\alpha^{f(j)}\ .\label{eq:ddecay}
\end{eqnarray}
Consider the Heisenberg evolution of the projection operators~$\{\P_j\}_j$.
Since Pauli operators either commute or anticommute, a straightforward calculation gives
\begin{eqnarray}
\cL^\dagger_k(\P_j^-)&=\cases{0& if $[C_k,S_j]=0$\\
(I-\P_j^-)\P_k^--\P_j^-\P_k^-& if $\{C_k,S_j\}=0$.\\}\nonumber
\end{eqnarray}
In particular, the expectation values behave classically under the designed Liouvillian, that is (writing $\langle X\rangle_t=\tr(Xe^{t\cL}(\rho))$ for brevity),
\begin{eqnarray}
\frac{d\langle \P_j^- \rangle_t }{dt}&= -\langle \P_j^- \rangle_t  +\sum_{k \in \textsf{Pred}(j)}  \langle(I-\P_j^-) \P_{k}^-\rangle_t-\langle \P_k^- \P_j^-  \rangle_t \nonumber\\
&\leq -\langle \P_j^- \rangle_t + \sum_{k \in \textsf{Pred}(j)}\langle \P_k^- \rangle_t\ . \label{ineq:anyonOccupations}
\end{eqnarray}
According to~\eqref{ineq:anyonOccupations}, we have
\begin{eqnarray*}
\frac{d\langle D \rangle_t}{dt}&\leq - \langle D\rangle_t+\sum_j \alpha^{f(j)}\sum_{k\in\mathsf{Pred}(j)} \langle\P_k^-\rangle_t\\
&\leq - \langle D\rangle_t+\alpha^{-1}\sum_k \alpha^{f(k)} \langle\P_k^-\rangle_t \sum_{j : k\in\mathsf{Pred}(j)} 1
\end{eqnarray*}
where we used property~\eqref{eq:monotonicitypropertyf} on the function~$f$. 
According to the definition of $m$ and $D$, this implies
\begin{eqnarray*}
\frac{d\langle D\rangle_t}{dt}&\leq -(1-\alpha^{-1}m)\langle D\rangle_t\ ,
\end{eqnarray*}
i.e., the expectation value decays exponentially.
The claim~\eqref{eq:ddecay} then follows from the fact that $\P_j^-\leq I$ for all $j\in\cS$ since these are projections, hence $\tr(D\rho)=\langle D\rangle_0\leq \sum_{j\in\cS}\alpha^{f(j)}$.

\end{proof}

With Lemma~\ref{lem:logicalinfopres} and Lemma~\ref{lem:convergencetimebound}, we are ready  to prove our main result.

\begin{proof}[Proof of Theorem~\ref{thm:encoder}]
Consider an initial state $\rho_0$ of the form~\eqref{eq:specialinitialstate} and assume that~$t\geq (4\ln(2))\cdot L + 2\ln(\epsilon^{-1})$.
By Lemma~\ref{lem:convergencetimebound}, we have 
$\tr(\gsc e^{t\cL}(\rho_0))\leq \epsilon$. With the gentle measurement lemma (see e.g.,~\cite[Lemma 9.4.1]{Wilde2013}), this implies 
\begin{eqnarray}
\|e^{t\cL}(\rho_0)-\bar{\rho}'_t\|_1\leq 2\sqrt{\epsilon}\ 
 \ ,\label{eq:tclrhoevolve}
\end{eqnarray}
where $\bar{\rho}'_t$ is defined as
$\bar{\rho}'_t:= \frac{\gs e^{t\cL}(\rho_0)\gs}{\tr(\gs e^{t\cL}(\rho_0))}$.
 Observe that the state~$\bar{\rho}'_t$ is  supported entirely on the code space~$\cQ$.

Note that for any two states~$\rho,\sigma$ we have 
$\|\rho-\sigma\|_1=\max_{\| P \| \leq 1}\tr(P(\rho-\sigma))$  and therefore
$|\tr(P(\rho-\sigma))|\leq\|\rho-\sigma\|_1$ for any normalized operator~$P$. 
Hence~\eqref{eq:tclrhoevolve} implies that for any normalized logical operator $\overline{P}$, we have
\begin{eqnarray}
\big|\tr\left(\overline{P}(e^{t\cL}(\rho_0)-\bar{\rho}_t')\right)\big|\leq 2\sqrt{\epsilon}\ .\label{eq:overlpexp}
\end{eqnarray}
Let $\ket{\overline{\Psi}}\in\cQ$ be the target encoded state, i.e., the state satisfying (cf.~\eqref{eq:logicalopeval} and~\eqref{eq:constantnonevolution})
\begin{eqnarray*}
\bra{\overline{\Psi}}\overline{P}\ket{\overline{\Psi}}=\bra{\Psi}P\ket{\Psi}=\tr(\overline{P}\rho_0)=\tr(\overline{P}e^{t\cL}(\rho_0))\qquad\textrm{ for any logical operator }\overline{P}\ .
\end{eqnarray*}
Combining this with~\eqref{eq:overlpexp}  shows that
$\proj{\overline{\Psi}}$ and $\bar{\rho}'_t$ have approximately the same expectation values for logical operators.
Since both states are supported on the ground space $\cQ$ for which we have access to a full algebra of logical observables (linear combinations of logical Pauli observables), one may choose\footnote{If $P$ is the normalized local observable which optimally distinguishes between the unencoded states $U^\dagger \proj{\overline{\Psi}} U$ and $U^\dagger \bar{\rho}'_t U$, then $\bar{P}=U^\dagger P U$ is the required logical operator.} a normalized logical operator $\overline{P}$ such that $\overline{P}\gs =\sgn(\proj{\overline{\Psi}}-\bar{\rho}'_t)$
which achieves the maximization defining $1$-norm distance
\begin{eqnarray}
\fl \big\|\proj{\overline{\Psi}}-\bar{\rho}'_t\big\|_1 = 
\tr[\sgn(\proj{\overline{\Psi}}-\bar{\rho}'_t)(\proj{\overline{\Psi}}-\bar{\rho}'_t)]=
\big|\tr(\overline{P}(\proj{\overline{\Psi}}-\bar{\rho}'_t))\big| \leq 2\sqrt{\epsilon}\ .\label{eq:closetpsi}
\end{eqnarray}

Combining~\eqref{eq:closetpsi} with~\eqref{eq:tclrhoevolve} and using the triangle inequality, we conclude that 
\begin{eqnarray*}
\|e^{t\cL}(\rho_0)-\proj{\overline{\Psi}}\|_1\leq 4\sqrt{\epsilon}\qquad \textrm{ if }t\geq (4\ln(2))\cdot L+2\ln (\epsilon^{-1})\ .
\end{eqnarray*}
The claim of Theorem~\ref{thm:encoder} follows immediately from this statement.
\end{proof}

\section{Conclusions and outlook}
We have presented a quasi-local time-independent Liouvillian which 
 generates a dissipative encoder: it encodes two physical qubits into the ground space of the toric code. Its key features are  translation-invariance in the bulk, optimal locality of the Lindblad operators, as well as optimal, i.e., linear convergence (encoding) time of the resulting dissipative encoder as a function of the lattice size. This illustrates the power of  engineered dissipation for state preparation in a non-trivial setting with degeneracy. 

Since the Liouvillian is constituted of quasi-local CPTPM associated with each stabilizer, one may interpret\footnote{To do so, simply expand the exponential.}  the resulting evolution $e^{t\cL}$ as the repeated application of the `average' correction map (cf.~\eqref{eq:averagebtsuperop}). This viewpoint immediately results in a preparation algorithm for surface code states. The latter inherits the properties of the dissipative evolution: it is translation-invariant up to boundary conditions,  composed of feedback-free quasi-local maps, and requires only a minimal number of iterations to converge.  These features make it an attractive candidate for potential experimental realizations.

\subsection*{Towards fault tolerance}

The most pressing open question is to incorporate noise in the form of control and initialization imperfections into the design and analysis of such an encoder. 
The holy grail is to engineer a reasonable, possibly time-dependent Liouvillian Master equation such that after a specified initial period physical qubits are encoded with only a constant, lattice-size independent probability of suffering uncorrectable errors. 
Expecting further reduction seems impossible without changing the assumptions due to the possibility of an error occuring on the unencoded qubit.
As it stands, our protocol $L$-fold amplifies such an initial error probability as certain errors on the qubits in $\mathcal{B} \cup \mathcal{C} \cup \mathcal{B}' \cup \mathcal{C}'$ are propagated into logical errors.
Indeed, we suspect that a fully fault-tolerant encoding scheme would require progressively  growing the code's physical support while simultaneously introducing dissipative terms responsible for self-correction within this support.
As in \cite{Pastawskietalmemory11} such terms would be responsible for ensuring that the rate of occurence for uncorrectable errors decreases with the code size.

\subsection*{Acknowledgments}
JD and RK gratefully acknowledge funding provided by NSERC. 
RK would like to thank John Preskill for discussions, and the Institute for Quantum Information and Matter for their hospitality. 
FP would like to thank Iman Marvian, Ignacio Cirac and John Preskill for helpful discussions and comments.
FP acknowledges funding provided by the Institute for Quantum Information and Matter, a NSF Physics Frontiers Center with support of the Gordon and Betty Moore Foundation (Grants No. PHY-0803371 and PHY-1125565).

\section*{References}

\end{document}